\numberwithin{equation}{section}
\newcommand{\E}{\mathcal{E}}
\newcommand{\U}{\mathcal{U}}
\newcommand{\W}{W}
\newcommand{\x}{\mathbf{x}}
\newcommand{\f}{\mathbf{f}}
\renewcommand{\u}{\mathbf{u}}
\newcommand{\y}{\mathbf{y}}
\renewcommand{\v}{\mathbf{v}}
\newcommand{\n}{\mathbf{n}}
\newcommand{\R}{\mathbb{R}}
\newcommand{\abs}[1]{\lvert #1 \rvert}
\renewcommand{\b}{{\mathsf b}}
\newtheorem{theorem}{\bf Theorem}[section]
\newtheorem{corollary}{\bf Corollary}[section]
\newtheorem{definition}{\bf Definition}[section]
\newtheorem{lemma}{\bf Lemma}[section]
\newtheorem{remark}{\bf Remark}[section]
\begin{document}

\title{On the Equivalence of Local and Global Area-Constraint
  Formulations for Lipid Bilayer Vesicles}
\author{Sanjay Dharmavaram$^{1}$ and Timothy J. Healey$^{2}$} 
\address{$^{1}$Dept. of Mechanical and Aerospace Engineering\\
Cornell University, Ithaca, NY.\\
$^{2}$Dept. of Mathematics\\
Cornell University, Ithaca, NY.\\
}


\keywords{lipid vesicles, area constraints, membrane fluidity,
  reparametrization invariance, conformal diffeomorphism}

\begin{abstract}
Lipid bilayer membranes are commonly modeled as area-preserving fluid
surfaces that resist bending. There appear to be two schools of
thought in the literature concerning the actual area constraint. In
some works the total or global area ($GA$) of the vesicle is a
prescribed constant, while in others the local area ratio is assigned
to unity. In this work we demonstrate the equivalence of these
ostensibly distinct approaches in the specific case when the
equilibrium configuration is a smooth, closed surface of genus
zero. We accomplish this in the context of the Euler-Lagrange
equilibrium equations, constraint equations and the second-variation
with admissibility conditions, for a broad class of models -- including
the phase-field type.
\end{abstract}

\maketitle
\section{Introduction}
Lipid membranes are commonly modeled as area-preserving fluid surfaces
that resist bending. There appear to be two schools of thought in the
literature concerning the actual area constraint. As in the pioneering
work of Helfrich \cite{helfrich}, the total or global area ($GA$) of
the vesicle is often a prescribed constant, e.g.,
cf. \cite{bonito2010parametric}, \cite{elliott2010modeling},
\cite{taniguchi}. On the other hand, in analogy with the assumption of
incompressibility in 3D continuum mechanics, it is natural to consider
2D-surface models characterized by local area ($LA$) preservation,
i.e., the local area ratio is assigned to unity. In this work we
demonstrate the equivalence of these ostensibly distinct
approaches in the specific case when the equilibrium configuration is
a smooth, closed surface of genus zero. We accomplish this in the
context of the Euler- Lagrange equilibrium equations, constraint
equations and the second-variation with admissibility conditions, for a
broad class of models -- including the phase-field type.

The outline of this work is as follows. In Section 2 we summarize the
two formulations -- $LA$ and $GA$ -- identifying their essential
differences, due to distinct constraints. We note the
reparametrization symmetry inherent in the $GA$ formulation in
Section 3. This leads naturally to the definition of an equivalence
class of solutions, each member of which has the same total surface
area while satisfying the $GA$ field equations. In Section 4 we show
there exists an element of the equivalence class that preserves the
local area ratio, as a deformation of the unit sphere $S^2$. Our result
here depends crucially upon the Riemann-Roch theorem, which insures
that a closed, genus-zero surface in $\R^3$ is conformally equivalent
to $S^2$, cf. \cite{Jost_CRS}. With this in hand we construct an
explicit change of coordinates ultimately yielding a locally
area-preserving solution.  

In Section 5 we consider the second-variation conditions, normally
associated with the determination of local energy minima. While the
expressions for the second variation agree, the area constraint
equations for the two formulations again yield apparently
different admissibility conditions. In particular, the
pointwise condition for the $LA$ formulation involves the tangential
variation, while the integral condition for the $GA$ formulation is
independent of it. Employing the Hodge decomposition
theorem -- in this case on a genus-zero, closed surface in $\R^3$ -- we
are able to show that the pointwise condition for the former reduces
to the integral condition of the latter.

\section{Formulations}
\label{sec:formulation}
Without loss of generality, we take the unit sphere $S^2$ as the
reference surface of the lipid vesicle. The deformed surface of the
vesicle is denoted by $\Sigma$. The \emph{deformation},
$\f:S^2\rightarrow \Sigma \subset \R^3$, relates the reference and the
current configurations of the membrane, \emph{i.e.,}
$\Sigma=\f(S^2)$. In the following we employ the Einstein convention
for tensors. Greek letters will be used to represent tensor indices
which are assumed to lie in the set $\{1,2\}$ corresponding to surface
coordinates, say, $\x:=(x^1,x^2)$ on $S^2$.

We consider a general class of phase-field  models for multi-phase
lipid membrane vesicles, with total internal potential energy given by
\begin{equation}
\U(\f,\phi)=\int_\Sigma \W(H,K,\phi,\abs{\nabla\phi})\;da,
\end{equation}
where $\W$ represents the energy density as a function of the mean and
Gaussian curvatures fields, $H$ and $K$, respectively, on $\Sigma$,
and $da$ represents the area measure on $\Sigma$. The phase field
variable $\phi$ represents the normalized difference in concentration
of the lipid components. Here $\nabla\phi$ refers to the (spatial)
gradient of $\phi$ on $\Sigma$, cf. \cite{taniguchi}. When $\phi\equiv
0$, we recover a generalized Helfrich model \cite{helfrich}.

Since the two components do not react with each other, their total
concentration is fixed on $\Sigma$. Accordingly, we impose the
constraint,
\begin{equation}
\int_\Sigma (\phi-\mu)\;da = 0,
\label{eq:conc_constr}
\end{equation}
where $\mu$ represents a fixed average concentration.

There are two common approaches to model area preservation in lipid
membranes -- local and global. In the former, every infinitesimal patch
on the surface of the membrane preserves area under deformation
\cite{jenkinsRBC,jenkins1977,steigman}. In the other approach, the
total area of the membrane is presumed fixed under deformation
\cite{Elliot,bonito2010parametric,fengKlug,maKlug}.
\subsection{Local Area (LA) Constrained Formulation}
\label{sc:LA_formulation}

In this formulation, area preservation is modeled locally via
\begin{equation}
J\equiv 1,
\label{eq:LA_constr}
\end{equation}
where the local area ratio $J$ is defined by
$J^2:=\det(D\f^TD\f)$, with $D\f$ denoting the total surface derivative of $\f$.

The total potential energy can be expressed as
\begin{equation}
\E_{LA}(\u) = \int_\Sigma \W(H,K,\phi,\abs{\nabla\phi})\;da
+\int_{S^2}\gamma_L(\x)(J-1)\;dA +\lambda\int_\Sigma (\phi-\mu)\;da-pV,
\label{eq:energy_LA}
\end{equation}
where $\gamma_L(\x)$ is the Lagrange multiplier field enforcing the
area constraint (\ref{eq:LA_constr}), $\lambda$ is the Lagrange
multiplier associated with the constraint on the phase field
(\ref{eq:conc_constr}), $p$ is the internal excess pressure and $V$ is
the total volume enclosed by $\Sigma$. For notational convenience we
write $\u:=(\f,\phi)$. 

The Euler-Lagrange equations are obtained by taking variations of the
energy (\ref{eq:energy_LA}) with respect to the fields $\f$ and
$\phi$. The variation of the deformation $\f$ is
defined by
\begin{equation}
\f(\x) \mapsto \f(\x) +\alpha\bm{\eta}(\x).
\label{eq:f->f+eta}
\end{equation}
for sufficiently small $\alpha$.

It is standard in this field is to express the variation ${\bm
  \eta}$ in terms of its components normal and tangential to the
surface $\Sigma$. The variation $\bm{\eta}$ when pushed forward using
$\y=\f(\x)$ takes the form
\begin{subequations}
\begin{equation}
 \y \mapsto \y + \alpha[\v(\y)+w(\y)\n(\y)],
\label{eq:def_var_cur}
\end{equation}
where $\v$ and $w$ are the tangential and normal variations expressed
in the current configuration, and $\n$ is the unit normal field on
$\Sigma$. The total variation of the phase field is then of the form
\begin{equation}
\phi(\y)\mapsto \phi(\y) + \alpha[\nabla\phi(\y)\cdot\v(\y) + \psi(\y)],
\label{eq:conc_var_cur}
\end{equation}
\end{subequations}
where $\psi$ is the spatial variation. In this way, the first
variation condition can be expressed as
\begin{multline}
\delta\E_{LA} = \int_\Sigma \Big\{\frac{1}{2}\Delta \W_H
+\tilde{\Delta}\W_K +
\frac{\W_{\Phi}}{\abs{\nabla\phi}}\b[\nabla\phi,\nabla\phi]+\\(2H^2-K)\W_H+2KH\W_K
-2H\Big[\W+\gamma_L+\lambda(\phi-\mu)\Big]-p\Big\}w+\\ \nabla\gamma_L\cdot\v+\Big\{
-\nabla\cdot\Big(\frac{\W_{\Phi}}{\abs{\nabla\phi}}\nabla\phi\Big)
+ \W_\phi + \lambda\Big\} \psi\;da=0,
\label{eq:LA_variation}
\end{multline}
for all smooth variations $\v$, $w$ and $\psi$, where
$\nabla\cdot(\cdot)$ is the surface divergence and $\Delta(\cdot)$ is
the Laplace-Beltrami on $\Sigma$, the various partial derivatives of
$\W(H,K,\phi,\Phi)$ are denoted $\W_H=\frac{\partial\W}{\partial H}$,
etc., and $\tilde\Delta(\cdot) :=
\tilde{b}^{\alpha\beta}\nabla_\alpha\nabla_\beta(\cdot)$ (using the
notation of \cite{tu2004geometric}), $\tilde{b}^{\alpha\beta}$ being
the cofactor matrix of the second fundamental form
$\b[d\x,d\x]:=b_{\alpha\beta}dx^\alpha dx^\beta$ of
$\Sigma$, \emph{i.e.,} $\b=-\nabla\n$.

Setting the variations $w$, $\v$ and $\psi$ pair-wise to be
identically equal to zero, we obtain the following Euler-Lagrange
equations,
\begin{subequations}
\label{eq:EL_L}
\begin{multline}
\frac{1}{2}\Delta \W_H +\tilde{\Delta}\W_K + \frac{\W_{\Phi}}{\abs{\nabla\phi}}\b[\nabla\phi,\nabla\phi]+(2H^2-K)\W_H+2KH\W_K
\\-2H\Big[\W+\gamma_L+\lambda(\phi-\mu)\Big]-p=0,
\label{eq:EL_L_normal}
\end{multline}
\begin{equation}
\nabla \gamma_L = {\mathbf 0},
\label{eq:EL_L_tangential}
\end{equation}
\begin{equation}
-\nabla\cdot\Big(\frac{\W_{\Phi}}{\abs{\nabla\phi}}\nabla\phi\Big) + \W_\phi + \lambda = 0.
\label{eq:EL_L_phi}
\end{equation}
We recover the  constraint equations by taking variations of the
energy with respect to the Lagrange multipliers:
\begin{equation}
J\equiv 1,
\label{eq:EL_L_constraint}
\end{equation}
\begin{equation}
\int_\Sigma (\phi-\mu)\;da =0.
\end{equation}
\end{subequations}
Of course (\ref{eq:EL_L_tangential}) implies that $\gamma_L$ is a
constant over $\Sigma$.
\subsection{Global Area (GA) Constrained Formulation}

In this formulation, the total surface area of
the membrane is assumed to be constant, \emph{viz.},
\begin{equation}
\int_\Sigma da = 4\pi.
\label{eq:GA_constr}
\end{equation}
The total energy here is similar to that for the $LA$ formulation
(discussed above), except for the term involving the area constraint,
now  associated with a scalar Lagrange multiplier $\gamma_G$:
\begin{equation}
\E_{GA} = \int_\Sigma \W(H,K,\phi,\abs{\nabla\phi})\;da +\gamma_G \int_\Sigma\;da +\lambda\int_\Sigma (\phi-\mu)\;da-pV.
\label{eq:energy_global}
\end{equation}
The first variation condition of $\E_{GA}$ is identical to
(\ref{eq:LA_variation}), except the term involving the tangential
variation vanishes identically. As in the $LA$ case, we then take the
normal and phase-field variations and obtain the following
Euler-Lagrange equations, respectively:
\begin{subequations}
\label{eq:EL_G}
\begin{multline}
\frac{1}{2}\Delta \W_H +\tilde{\Delta}\W_K + \frac{\W_{\Phi}}{\abs{\nabla\phi}}\b[\nabla\phi,\nabla\phi]+(2H^2-K)\W_H+2KH\W_K
\\-2H\Big[\W+\gamma_G+\lambda(\phi-\mu)\Big]-p=0,
\label{eq:EL_G_normal}
\end{multline}
\begin{equation}
-\nabla\cdot\Big(\frac{\W_{\Phi}}{\abs{\nabla\phi}}\nabla\phi\Big) + \W_\phi + \lambda = 0.
\label{eq:EL_G_phi}
\end{equation}
The associated constraints are
\begin{equation}
\int_\Sigma da = 4\pi,
\label{eq:EL_G_constraint}
\end{equation}
\begin{equation}
\int_\Sigma (\phi-\mu)\;da =0.
\end{equation}
\end{subequations}

Note that unlike the $LA$ formulation, the tangential equation
vanishes identically in this case. We further observe that since
$\gamma_{L}$ is constant on $\Sigma$ (cf. Section
2\ref{sc:LA_formulation}) the system of partial differential equations
-- (\ref{eq:EL_L}a,c) and (\ref{eq:EL_G}a,b) -- for the two
formulations are identical if we set $\gamma_L=\gamma_G=\gamma$.  In
the absence of the phase field $\phi$, this observation has been noted
in other works \cite{jenkinsRBC,steigmanFluidFilms}. Although it may
be tempting to deduce from this that the two formulations are
equivalent, this argument alone is insufficient, due to the fact that
the area constraints, (\ref{eq:EL_L_constraint}) and
(\ref{eq:EL_G_constraint}), are different. Of course, any solution of
the $LA$ formulation is also a solution of the $GA$ formulation.

\section{Reparametrization Symmetry}
\label{sec:repara_sym}
The vanishing of the tangential equation noted above can be attributed
to the reparametrization symmetry of the $GA$ formulation. This
infinite-dimensional symmetry group is a manifestation of the fluidity
of lipid membranes. In this section we see that as a consequence of
the symmetry, solutions of the $GA$ formulation (when they exist)
belong to an infinite-dimensional equivalence class. To proceed, we need
to be more precise: 

By a \emph{solution} of the $LA$ formulation ($GA$ formulation), we
mean there are smooth mappings $\f:S^2\to \Sigma\subset \R^3$ and
$\phi:\Sigma\to\R$, and a smooth parametrization of $S^2$, ${\bf
X}:\Omega\to S^2$, where $\Omega\subset \R^2$ is a coordinate chart,
such that
\begin{subequations}
\begin{equation}
{\bf Y}(X^1,X^2):=\f({\bf X}(X^1,X^2))
\label{eq:solution_Y}
\end{equation}
and
\begin{equation}
\Psi(X^1,X^2):=\phi({\bf Y}(X^1,X^2))
\end{equation}
\label{eq:solution}%
\end{subequations}
satisfy the system (\ref{eq:EL_L}) ((\ref{eq:EL_G})) identically. In particular,
(\ref{eq:EL_L_constraint}) reads
\begin{subequations}
\begin{equation}
J=\sqrt{\frac{a}{A}}\equiv 1,
\label{eq:J_equals_1}
\end{equation}
where
\begin{equation}
a:=\det[a_{\alpha\beta}],
\label{eq:a}
\end{equation}
\begin{equation}
A:=\det[A_{\alpha\beta}],
\end{equation}
with
\begin{equation}
a_{\alpha\beta}:={\bf Y}_{,\alpha}\cdot {\bf Y}_{,\beta},
\end{equation}
\begin{equation}
A_{\alpha\beta}:={\bf X}_{,\alpha}\cdot {\bf X}_{,\beta}.
\end{equation}
\label{eq:a_and_A}%
\end{subequations}
being the components of the first fundamental forms for $\Sigma$ and
$S^2$, respectively, where ${\bf Y}_{,\alpha}:=\frac{\partial{\bf
    Y}}{\partial X^\alpha}$, \emph{etc}. In what follows we consistently employ
the above convention, \emph{viz.}, the determinant of the matrix of
components of the first fundamental form is denoted by the same letter
employed (with indices) for the components.

Clearly, the $GA$ formulation (\ref{eq:energy_global}) is independent
of the coordinate parametrization of $\Sigma$, reflecting the in-plane
fluidity of the model. Therefore, for any $\chi\in\text{Diff}(S^2)$,
the diffeomorphism group of $S^2$ into itself, we have
\begin{equation}
\E_{GA}(\u(\x))=\E_{GA}(\u\circ\chi(\x)).
\label{eq:repara}
\end{equation}
It then follows from (\ref{eq:repara}) and the first variation
condition that if $\u$ is a solution to the system of equations
(\ref{eq:EL_G}) then so is $\u\circ\chi$. This observation motivates
the following definition.

\begin{definition}[Equivalence class of solutions]
Two solutions $\u$ and $\u^*$ of $GA$ formulation are said to be
equivalent, if there exists a $\chi\in\text{Diff}(S^2)$ such that
\begin{equation}
\u^*=\u\circ\chi = (\f\circ\chi,\phi(\f\circ\chi)).
\label{eq:equiv_class}
\end{equation}
We denote this equivalence class by $[\u]$. Clearly, this set
contains infinite elements, presuming the existence of a solution.
\end{definition}
\noindent
We remark that $\text{Diff}(S^2)$ is not a symmetry group of the $LA$
formulation, since the local area constraint
(\ref{eq:EL_L_constraint}) under reparametrizations $\chi$ transforms
as
\begin{equation}
J(\chi(\x))\det(D\chi(\x));
\label{eq:Jdet}
\end{equation}
(\ref{eq:EL_L_constraint}) is clearly not
invariant unless $\det(D\chi(\x))\equiv 1$.

\section{Equivalence: Equilibria}
\label{sec:equiv_equil}
Let us first recall that any solution of the $LA$ formulation is
automatically a solution of the $GA$ formulation and therefore, the
solution set of the former is a subset of the latter. In this section,
we establish a converse by showing that there is a representative in
the equivalence class of the genus-zero solutions of the $GA$
formulation that satisfies the local area constraint. By explicit
construction, we show that any solution $\u(\x)$ of the $GA$
formulation may be mapped to a solution of the $LA$ formulation via a
diffeomorphism. Since the Euler-Lagrange equations and the constraint
on phase field are identical for the two formulations, it is
sufficient to consider the area constraints only.

We first consider the usual parametrization of $S^2$ via spherical coordinates,
$(\theta^1,\theta^2)\in \Omega:=(0,\pi)\times (0,2\pi)$, \emph{viz.},
\begin{equation}
{\bf R}(\theta^1,\theta^2) = \sin\theta^1\Big(\cos\theta^2 {\bf e}_1+\sin\theta^2 {\bf e}_2\Big)+\cos\theta^1 {\bf e}_3,
\label{eq:R}
\end{equation}
where $\{{\bf e}_1,{\bf e}_2, {\bf e}_3\}$ denotes the standard orthonormal
basis for $\R^3$. The components of the first fundamental form are the given by
\begin{subequations}
\begin{equation}
[G_{\alpha\beta}]:= [{\bf R}_{,\alpha}\cdot {\bf R}_{,\beta}]=\left(\begin{array}{cc}
1& 0 \\
0&\sin^2(\theta^{1})
\end{array}\right),
\label{eq:matrix_metric}
\end{equation}
and thus
\begin{equation}
G=\sin^2\theta^1.
\label{eq:G_sin2}
\end{equation}
\label{eq:metric_G}%
\end{subequations}

We assume that the current configuration, $\Sigma=\f(S^2)$, is a
smooth, closed surface of genus zero and total area $4\pi$. As in
(\ref{eq:solution_Y}), we consider the convected parametrization
\begin{equation}
{\bf Y}(\theta^1,\theta^2)=\f({\bf R}(\theta^1,\theta^2)).
\label{eq:Y}
\end{equation}
By virtue of the Riemann-Roch theorem \cite{Jost_CRS}, we know that $\Sigma$ is
conformally equivalent to $S^2$, and without loss of generality, we assume that 
$\f(\cdot)$ is chosen in (\ref{eq:Y}) such that 
\begin{equation}
g_{\alpha\beta}:={\bf Y}_{,\alpha}\cdot {\bf Y}_{,\beta} = \lambda(\theta^1,\theta^2)G_{\alpha\beta},
\label{eq:metric}
\end{equation}
where $\lambda(\cdot): \Omega\to \R^+$ is smooth, positive and
bounded. We remark that (\ref{eq:metric}) follows provided that
$\f:S^2\to\Sigma$ is a harmonic map, cf. \cite{Jost_CRS}, \cite{LWYGL}. In
view of (\ref{eq:metric_G}) and (\ref{eq:metric}), we have
\begin{equation}
g=\lambda^2G.
\label{eq:g=lambda2G}
\end{equation}

Next we define new coordinates as follows:
\begin{equation}
\Phi_1(\theta^1):=\arccos\Big(1-\frac{1}{2\pi}\int_0^{\theta^1}\int_0^{2\pi}\sqrt{g}(\sigma,\tau)\;d\tau d\sigma\Big),
\label{eq:phi_1}
\end{equation}
where, using (\ref{eq:G_sin2}), (\ref{eq:metric}) and (\ref{eq:g=lambda2G}),  we have
\begin{equation}
\sqrt{g}=\lambda(\theta^1,\theta^2)\sin\theta^1.
\label{eq:sqrt_g}
\end{equation}
Note that
\begin{subequations}
\begin{equation}
\Phi_1:[0,\pi]\to[0,\pi]\text{ is continuous and strictly increasing, with}
\label{eq:phi_1_inc}
\end{equation}
\begin{equation}
\Phi_1(0)=0.
\label{eq:phi_1_0_0}
\end{equation}
\label{eq:phi_1_inc_1_0_0}%
\end{subequations}
Now from the constraint (\ref{eq:EL_G_constraint}), we have
$$\int_\Sigma\;da=\int_0^\pi\int_0^{2\pi} \sqrt{g}(\sigma,\tau)\;d\tau d\sigma=4\pi,$$
which together with (\ref{eq:phi_1}) yields
\begin{equation}
\Phi_1(\pi)=\arccos(-1)=\pi.
\label{eq:phi_1(pi)=pi}
\end{equation}
By virtue of (\ref{eq:phi_1}), we also note that  
\begin{equation}
\frac{d\Phi_1}{d\theta^1}=\frac{\sin\theta^1}{2\pi\sin(\Phi_1(\theta^1))}\int_0^{2\pi}
\lambda(\theta^1,\tau)\;d\tau > 0\text{ on }(0,\pi).
\label{eq:phi1_prime}
\end{equation}

Next we define
\begin{equation}
\Phi_2(\theta^1,\theta^2):=\frac{2\pi\int_0^{\theta^2}\lambda(\theta^1,\tau)\;d\tau}{
\int_0^{2\pi}\lambda(\theta^1,\tau)\;d\tau}.
\label{eq:phi_2}
\end{equation}
Clearly, for each $\theta^1\in[0,\pi]$:
\begin{subequations}
\begin{equation}
\theta^2\mapsto \Phi_2(\theta^1,\theta^2)\text{ is continuous and strictly increasing on }[0,2\pi],
\label{eq:phi_2_cont}
\end{equation}
\begin{equation}
\Phi_2(\theta^1,0)=0,\text{ and}
\label{eq:phi_2_0}
\end{equation}
\begin{equation}
\Phi_2(\theta^1,2\pi)=2\pi.
\end{equation}
\label{eq:phi_2_prop}%
\end{subequations}
From (\ref{eq:phi_2_prop}) we also note that 
\begin{equation}
\frac{\partial\Phi_2}{\partial\theta^2}=\frac{2\pi\lambda(\theta^1,\theta^2)}{
\int_0^{2\pi}\lambda(\theta^1,\tau)\;d\tau} > 0\text{ on }[0,\pi]\times[0,2\pi].
\label{eq:phi2_prime}
\end{equation}

Now define the coordinate change
\begin{equation}
\Phi(\theta^1,\theta^2)=\Big(\Phi_1(\theta^1),\Phi_2(\theta^1,\theta^2)\Big).
\label{eq:phi_def}
\end{equation}
Since $\Phi_1(\cdot)$ is independent of $\theta^2$, we see from
(\ref{eq:phi1_prime}) and (\ref{eq:phi2_prime}), that the Jacobian determinant of the
transformation satisfies
\begin{equation}
\det(D\Phi) = \frac{\lambda(\theta^1,\theta^2)\sin(\theta^1)}{\sin(\Phi_1(\theta^1))}>0\text{ on }\Omega.
\label{eq:detDPhi}
\end{equation}
For our construction that follows, we need:

\begin{lemma}
$\Phi:\bar\Omega\to\bar\Omega$ is a homoeomorphism, and $\Phi:\Omega\to\Omega$ is an 
orientation preserving diffeomorphism.
\end{lemma}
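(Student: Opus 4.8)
The plan is to treat the two assertions in turn, each reducing to a standard principle once the one-variable facts about $\Phi_1$ and the parametrized facts about $\Phi_2$ already recorded in (\ref{eq:phi_1_inc_1_0_0})--(\ref{eq:phi2_prime}) are assembled. As a preliminary I would check that $\Phi$ does map $\bar\Omega$ into $\bar\Omega$: since $\sqrt{g}\ge 0$ and $\int_0^\pi\!\int_0^{2\pi}\sqrt{g}\,d\tau\,d\sigma = 4\pi$ by (\ref{eq:EL_G_constraint}), the argument of the arccosine in (\ref{eq:phi_1}) stays in $[-1,1]$, so $\Phi_1(\theta^1)\in[0,\pi]$; and $0\le \int_0^{\theta^2}\lambda(\theta^1,\tau)\,d\tau \le \int_0^{2\pi}\lambda(\theta^1,\tau)\,d\tau$ gives $\Phi_2(\theta^1,\theta^2)\in[0,2\pi]$ from (\ref{eq:phi_2}).

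For the homeomorphism claim, the key is that $\Phi$ is a continuous bijection of the compact Hausdorff space $\bar\Omega$, hence automatically a homeomorphism (a continuous bijection from a compact space to a Hausdorff space is closed, so its inverse is continuous). Bijectivity I would argue coordinate by coordinate: by (\ref{eq:phi_1_inc}) with $\Phi_1(0)=0$ and (\ref{eq:phi_1(pi)=pi}), $\Phi_1$ is a strictly increasing continuous bijection of $[0,\pi]$; for each fixed $\theta^1$, (\ref{eq:phi_2_cont}) together with the endpoint values in (\ref{eq:phi_2_prop}) makes $\Phi_2(\theta^1,\cdot)$ a strictly increasing continuous bijection of $[0,2\pi]$. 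Injectivity of $\Phi$ then follows by first matching first components to force equality of $\theta^1$ and then matching second components; surjectivity follows by solving these two scalar equations successively. Joint continuity of $\Phi$ reduces to joint continuity of $(\theta^1,\theta^2)\mapsto \int_0^{\theta^2}\lambda(\theta^1,\tau)\,d\tau$ and to the non-vanishing of the denominator $\int_0^{2\pi}\lambda(\theta^1,\tau)\,d\tau$.

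For the diffeomorphism claim I would restrict to the open rectangle $\Omega$. First, $\Phi$ carries $\Omega$ onto $\Omega$: the boundary values (\ref{eq:phi_1_inc_1_0_0}), (\ref{eq:phi_1(pi)=pi}) and (\ref{eq:phi_2_prop}) give $\Phi(\partial\Omega)\subseteq\partial\Omega$, and since $\Phi$ is a homeomorphism of $\bar\Omega$ whose topological boundary is $\partial\Omega$, it must restrict to a bijection $\Omega\to\Omega$. Second, $\Phi$ is smooth on $\Omega$: on $(0,\pi)$ the arccosine argument equals $\cos\Phi_1(\theta^1)$ with $\Phi_1(\theta^1)\in(0,\pi)$, hence lies in $(-1,1)$ where $\arccos$ is $C^\infty$, so $\Phi_1$ is smooth there, and $\Phi_2$ is smooth on $\Omega$ because $\lambda$ is smooth and the denominator is positive. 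Finally, by (\ref{eq:detDPhi}) the Jacobian determinant $\det(D\Phi) = \lambda(\theta^1,\theta^2)\sin\theta^1/\sin(\Phi_1(\theta^1))$ is strictly positive on $\Omega$, so the inverse function theorem applies at each point; combined with the global bijectivity already established, this makes $\Phi^{-1}$ smooth on $\Omega$, and the positivity of the Jacobian makes $\Phi$ orientation preserving.

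The step I expect to require the most care is the regularity of $\lambda$ near the poles $\theta^1=0,\pi$, where the spherical chart (\ref{eq:R}) degenerates; it enters in two places above — joint continuity of $\Phi$ on the closed rectangle and positivity of $\int_0^{2\pi}\lambda(\theta^1,\tau)\,d\tau$ up to those endpoints. I would resolve it by recalling that $\lambda$ is the conformal factor in (\ref{eq:metric}) relating the smooth, nondegenerate pullback metric $g$ to the round metric $G$ on the compact surface $S^2$, so $\lambda$ in fact extends to a smooth, strictly positive function on all of $S^2$; in particular $\theta^1\mapsto \int_0^{2\pi}\lambda(\theta^1,\tau)\,d\tau$ is continuous and bounded below by a positive constant on $[0,\pi]$, and $\Phi_2$ extends continuously to $\bar\Omega$, so the topological argument of the second paragraph goes through without incident.
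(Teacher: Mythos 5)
Your proof is correct, but it follows a genuinely different route from the paper's. The paper does not argue bijectivity directly: it introduces the auxiliary map $\tilde\Phi=(\Phi_1,\tilde\Phi_2)$ with $\tilde\Phi_2$ the linear interpolant (\ref{eq:phi_tilde}) of the boundary traces of $\Phi_2$, checks that $\tilde\Phi$ is injective and that $\Phi|_{\partial\Omega}=\tilde\Phi|_{\partial\Omega}$, and then invokes the classical Brouwer-degree global injectivity theorem (cf.\ \cite{ciarlet}) -- a continuous map on $\bar\Omega$ with positive Jacobian in $\Omega$ agreeing on $\partial\Omega$ with an injective map is a homeomorphism onto the corresponding image -- before finishing, as you do, with the inverse function theorem. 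You instead exploit the triangular structure of $\Phi$ (the first component depends only on $\theta^1$) together with the strict monotonicity and endpoint values (\ref{eq:phi_1_inc_1_0_0}), (\ref{eq:phi_1(pi)=pi}), (\ref{eq:phi_2_prop}) to get injectivity and surjectivity of $\Phi:\bar\Omega\to\bar\Omega$ by solving the two scalar equations successively, and then use compactness of $\bar\Omega$ to upgrade the continuous bijection to a homeomorphism; this avoids degree theory entirely and is arguably the more elementary argument, though it relies on the special product-like form of $\Phi$, whereas the paper's degree argument would survive a less structured coordinate change. Two small points: your statement that $\Phi(\partial\Omega)\subseteq\partial\Omega$ plus the homeomorphism property forces $\Phi(\Omega)=\Omega$ is most cleanly closed by noting directly that strict monotonicity and the endpoint values put $\Phi_1((0,\pi))\subset(0,\pi)$ and $\Phi_2(\theta^1,(0,2\pi))\subset(0,2\pi)$, so interior maps into interior and boundary onto boundary; and your final paragraph on the smooth positive extension of the conformal factor $\lambda$ to the poles is a detail the paper leaves implicit (it simply asserts (\ref{eq:phi_2_prop}) and (\ref{eq:phi2_prime}) on the closed rectangle), so making it explicit is a welcome addition rather than a deviation.
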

\begin{proof}
By the construction (\ref{eq:phi_1})-(\ref{eq:detDPhi}), it follows that 
\begin{equation}
\Phi\in C(\bar\Omega,\R^2)\cap C^1(\Omega,\R^2),
\label{eq:phi_continuous}
\end{equation}
with positive Jacobian determinant on $\Omega$. Consider the mapping,
$\tilde\Phi_2:\bar\Omega\to\R$, defined by
\begin{equation}
\tilde\Phi_2(\theta^1,\theta^2):=(1-\frac{\theta^1}{\pi})\Phi_2(0,\theta^2)+\frac{\theta^1}{\pi}\Phi_2(\pi,\theta^2),
\label{eq:phi_tilde}
\end{equation}
and further define $\tilde\Phi:\bar\Omega\to \R^2$ via the continuous map
\begin{equation}
\tilde\Phi(\theta^1,\theta^2):=\Big(\Phi_1(\theta^1),\tilde{\Phi}_2(\theta^1,\theta^2)\Big).
\label{eq:phi_tilde_def}
\end{equation}
By virtue (\ref{eq:phi_1_inc_1_0_0}) and (\ref{eq:phi_2_prop}a,b), we note that
$$\Phi_1(\theta^1)=0\iff \theta^1=0,$$ 
and
$$\tilde\Phi(0,\theta_2)\equiv\Phi_2(0,\theta_2)=0\iff \theta_2=0,$$
respectively, \emph{i.e.,} 
\begin{equation}
\tilde\Phi(\cdot)\text{  is injective on }\bar\Omega.
\label{eq:phi_tilde_injective}
\end{equation}

We further claim that
\begin{equation}
\Phi|_{\partial\Omega}=\tilde{\Phi}|_{\partial\Omega}.
\label{eq:phi_boundary}
\end{equation}
Indeed, from (\ref{eq:phi_1_inc_1_0_0}), (\ref{eq:phi_1(pi)=pi}),
(\ref{eq:phi_2_prop}), (\ref{eq:phi_tilde}) and (\ref{eq:phi_tilde_def}), we
find:
\begin{align*}
&\tilde\Phi(0,\theta^2)=\Phi(0,\theta^2)=(0,\Phi_2(0,\theta^2)),&\\
&\tilde\Phi(\pi,\theta^2)=\Phi(\pi,\theta^2)=(\pi,\Phi_2(\pi,\theta^2)),\;\theta^2\in[0,2\pi];&\\
&\tilde\Phi(\theta^1,0)=\Phi(\theta^1,0)=(\Phi_1(\theta^1),0),&\\
&\tilde\Phi(\theta^1,2\pi)=\Phi(\theta^1,2\pi)=(\Phi_1(\theta^1),2\pi),\;\theta^1\in[0,\pi].&\\
\end{align*}
With (\ref{eq:detDPhi}), (\ref{eq:phi_continuous}),
(\ref{eq:phi_tilde_injective}) and (\ref{eq:phi_boundary}) in hand,
the first assertion now follows from a well known argument based on the Brouwer
degree, cf. \cite{ciarlet}. The inverse function theorem then implies that
$\Phi(\cdot)$ is a local $C^2$-diffeomorphism, and thus it is globally so on
$\Omega$.
\end{proof}

We now consider a new parametrization of $\Sigma$ given by
\begin{equation}
{\bf r}(\phi^1,\phi^2):={\bf Y}(\Phi^{-1}(\phi^1,\phi^2)),\;(\phi^1,\phi^2)\in\Omega,
\label{eq:r_phi}
\end{equation}
where ${\bf Y}(\cdot)$ and $\Phi(\cdot)$ are as defined in (\ref{eq:Y}) and (\ref{eq:phi_def}),
respectively. Denoting the components of the first fundamental form as
\begin{equation}
[a_{\alpha\beta}]:=\frac{\partial{\bf r}}{\partial\phi^\alpha}\cdot \frac{\partial{\bf r}}{\partial\phi^\beta},
\end{equation}
then direct differentiation of (\ref{eq:r_phi}), using (\ref{eq:metric}), yields
$$[a_{\alpha\beta}]=D\Phi^{-T}D{\bf Y}^TD{\bf Y}D\Phi^{-1}$$
\begin{equation}
=\lambda D\Phi^{-T}[G_{\alpha\beta}]D\Phi^{-1}.
\label{eq:a_new_ab}
\end{equation}
Taking the determinant of both sides of (\ref{eq:a_new_ab}) leads to
$$a=\frac{\lambda^2 G}{(\det D\Phi)^2},$$
and subsequent use of (\ref{eq:G_sin2}) and (\ref{eq:detDPhi}) then gives
\begin{equation}
a=\sin^2(\phi^1).
\label{eq:a_in_phi}
\end{equation}

We now state:
\begin{theorem}
Suppose that (\ref{eq:Y}) yields a solution of the $GA$ formulation
(\ref{eq:EL_G}) according to (\ref{eq:solution}), such that the closed
surface $\Sigma=\f(S^2)$ has genus zero, and thus (\ref{eq:metric})
holds. Define the diffeomorphism ${\bm \chi}:S^2\to S^2$,
\begin{equation}
{\bm\chi}:={\bf R}\circ{\Phi}^{-1}\circ{\bf R}^{-1},
\label{eq:my_chi}
\end{equation}
where ${\bf R}(\cdot)$ and $\Phi(\cdot)$ are as defined by (\ref{eq:R}) and
(\ref{eq:phi_def}), respectively. Then
\begin{equation}
{\bf u}^* := \Big(\f\circ{\bm \chi},\phi(\f\circ{\bm \chi})\Big),
\label{eq:u_star}
\end{equation}
belonging to the equivalence class $[{\bf u}]$, cf (\ref{eq:equiv_class}),
is a solution of the $LA$ formulation.
\label{thm:equiv_equil}
\end{theorem}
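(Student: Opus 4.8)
The plan is to verify in turn the three things that make $\u^*$ a solution of the $LA$ formulation: that $\u^*$ lies in the $GA$ equivalence class $[\u]$ and hence solves the $GA$ system (\ref{eq:EL_G}); that it then satisfies the $LA$ Euler--Lagrange and concentration equations; and finally that it meets the local area constraint (\ref{eq:EL_L_constraint}). I would start with the first point. By the Lemma, $\Phi:\Omega\to\Omega$ is an orientation-preserving diffeomorphism and $\Phi:\bar\Omega\to\bar\Omega$ a homeomorphism; moreover the boundary identities computed in the proof of the Lemma show that $\Phi$ carries $\{\theta^1=0\}$ to $\{\phi^1=0\}$, $\{\theta^1=\pi\}$ to $\{\phi^1=\pi\}$, and each of $\{\theta^2=0\}$, $\{\theta^2=2\pi\}$ to itself -- precisely the identifications under which ${\bf R}$ glues $\bar\Omega$ onto $S^2$. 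Consequently $\bm\chi={\bf R}\circ\Phi^{-1}\circ{\bf R}^{-1}$ is a well-defined homeomorphism of $S^2$, smooth with nonvanishing Jacobian off the two poles; smoothness at the poles uses the hypotheses on $\lambda$ in (\ref{eq:metric}) (smooth, positive, bounded), which via (\ref{eq:phi_1})--(\ref{eq:sqrt_g}) force $\sin\theta^1/\sin\Phi_1(\theta^1)$, hence $d\Phi_1/d\theta^1$ and -- together with (\ref{eq:phi2_prime}) -- the full Jacobian $D\Phi$, to stay finite and invertible up to $\theta^1=0,\pi$. Thus $\bm\chi\in\text{Diff}(S^2)$, so $\u^*=\u\circ\bm\chi\in[\u]$, and by the reparametrization invariance (\ref{eq:repara}) and the argument following it in Section~\ref{sec:repara_sym}, $\u^*$ solves (\ref{eq:EL_G}).

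Next I would pass to the $LA$ system. Since (\ref{eq:EL_L_tangential}) forces $\gamma_L$ to be constant, identifying $\gamma_L=\gamma_G=\gamma$ makes (\ref{eq:EL_L}a,c) the very same equations as (\ref{eq:EL_G}a,b), makes the $LA$ tangential equation (\ref{eq:EL_L_tangential}) automatic, and leaves the concentration constraint (\ref{eq:conc_constr}) and the phase-field equation unchanged; so by the previous step $\u^*$ satisfies all of (\ref{eq:EL_L}) except possibly (\ref{eq:EL_L_constraint}). For that last one I would invoke the identity already prepared before the theorem: the convected parametrization of $\Sigma$ associated with the deformation $\f\circ\bm\chi$ and the reference chart ${\bf R}$ is exactly ${\bf r}$ of (\ref{eq:r_phi}), because $(\f\circ\bm\chi)\circ{\bf R}=\f\circ{\bf R}\circ\Phi^{-1}={\bf Y}\circ\Phi^{-1}={\bf r}$. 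Hence the local area ratio of $\u^*$ is $J=\sqrt{a/A}$, with $a=\det[a_{\alpha\beta}]$ the first-fundamental-form determinant of ${\bf r}$ and $A=\sin^2\phi^1$ that of ${\bf R}(\phi^1,\phi^2)$; and (\ref{eq:a_in_phi}) states precisely $a=\sin^2\phi^1$, so $J\equiv1$ on $\Omega$, and hence on all of $S^2$ by continuity. Therefore $\u^*$ satisfies (\ref{eq:EL_L}) in full, i.e.\ it is a solution of the $LA$ formulation.

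I expect the only real subtlety to be the first step: upgrading $\Phi$ -- which the Lemma certifies only as a diffeomorphism of the open rectangle $\Omega$ and a homeomorphism of its closure -- to a bona fide diffeomorphism $\bm\chi$ of the closed manifold $S^2$, i.e.\ checking that the construction is well behaved at the polar coordinate singularities and across the periodic seam at $\theta^2=0,2\pi$. Everything past that is assembly: the $GA$ and $LA$ Euler--Lagrange systems coincide once $\gamma_L=\gamma_G$, and the single extra demand on $\u^*$, namely $J\equiv1$, is nothing but the already-derived relation (\ref{eq:a_in_phi}).
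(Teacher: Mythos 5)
Your proposal is correct and follows essentially the same route as the paper: the heart of the argument is exactly the paper's, namely that $(\f\circ{\bm\chi})\circ{\bf R}={\bf Y}\circ\Phi^{-1}={\bf r}$, so that (\ref{eq:a_in_phi}) together with (\ref{eq:G_sin2}) gives $J\equiv 1$, while the coincidence of the Euler--Lagrange systems under $\gamma_L=\gamma_G$ reduces everything to the area constraint, just as the paper states before the theorem. Your extra care in checking that ${\bm\chi}$ is genuinely a diffeomorphism of $S^2$ (behavior at the poles and across the coordinate seam) is a point the paper leaves implicit, and is a welcome, not a divergent, addition.
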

\begin{proof}
It is enough to show that (\ref{eq:J_equals_1}) is satisfied by
$\f^*:=\f\circ\chi$. Now (\ref{eq:Y}), (\ref{eq:r_phi}) and
(\ref{eq:my_chi}) yield
\begin{equation}
{\bf r}(\phi^1,\phi^2)=\f^*({\bf R}(\phi^1,\phi^2)),\;(\phi^1,\phi^2)\in\Omega,
\end{equation}
and direct differentiation leads to
\begin{equation}
[a_{\alpha\beta}]=D{\bf R}^TD\f^{*T}D\f^*D{\bf R}.
\label{eq:a_proof}
\end{equation}
Taking the determinant of both sides of (\ref{eq:a_proof}) and employing
(\ref{eq:G_sin2}) and (\ref{eq:a_in_phi}), we see that
$$\det D\f^{*T}D\f^{*}\equiv 1.$$
\end{proof}
\section{Equivalence of Second Variation Condition}
\label{sec:SV_stab_constr}

Our next goal is to show that the second variation conditions
including admissibility of variations for both formulations are
equivalent. Since the problem involves constraints, specifically all
smooth variations $(\v,w,\psi)$ (cf. Section
2\ref{sc:LA_formulation}) must satisfy the linearized constraint
equations, which define admissibility. The two variants of the area
constraint give us two seemingly different criteria for admissibility.

In the $LA$ formulation, the linearization of the local area
constraint (\ref{eq:LA_constr}) is given by
\begin{equation}
(\nabla\cdot\v-2Hw)=0\text{ on }\Sigma,
\label{eq:lin_local_area}
\end{equation}
for all smooth variations $\v$, $w$, cf. (\ref{eq:def_var_cur}).

To see this, write $\f(\x)\to\f(\x)+\alpha{\bm\eta}(\x)$ for $\alpha$
sufficiently small with ${\bm \eta}(\cdot)$ smooth (as in
(\ref{eq:f->f+eta})), and consider
\begin{equation}
\det\Big(D\f+\alpha D{\bm \eta}\Big)^T\Big(D\f+\alpha D{\bm\eta}\Big)\equiv 1,
\label{eq:D(f+eta)^TD(f+eta)=1}
\end{equation}
cf. (\ref{eq:LA_constr}). Differentiating
(\ref{eq:D(f+eta)^TD(f+eta)=1}) with respect to $\alpha$ and then
evaluation the result at $\alpha=0$ yields
\begin{equation}
D\f^{-1}D\f^{-T}\cdot\Big(D\f^TD{\bm\eta}+D{\bm\eta}^TD\f\Big)\equiv 0.
\label{eq:lin_Df_J=1}
\end{equation}
From (\ref{eq:f->f+eta}), (\ref{eq:def_var_cur}) and the chain rule we deduce
\begin{equation}
D{\bm\eta}=\Big(D_{\y}\v+\n\otimes\nabla w+w\nabla\n\Big)D\f,
\label{eq:Deta=Dv+bla}
\end{equation}
where the subscript in (\ref{eq:Deta=Dv+bla}) is meant to emphasize the
total derivative with respect to the spatial variable $\y$. Substituting 
(\ref{eq:Deta=Dv+bla}) into (\ref{eq:lin_Df_J=1}) then leads to 
\begin{equation}
\nabla\cdot\v+w\nabla\cdot \n=0,
\end{equation}
where we have used the fact that $\n\cdot\nabla w\equiv 0$. Finally
the identity $\nabla\cdot\n=-2H$ gives (\ref{eq:lin_local_area}).

On the other hand, in the $GA$ formulation, we obtain the following
linearization for the global area constraint (\ref{eq:GA_constr}):
\begin{equation} 
\int_{\Sigma}(\nabla\cdot\v - 2Hw)\;da = 0.  
\end{equation} 
From the divergence theorem, this becomes
\begin{equation} 
\int_{\Sigma}Hw\;da=0.
\label{eq:lin_global_area}
\end{equation}
The linearization of the concentration equation (\ref{eq:conc_constr}) for
both formulations is
\begin{equation}
\int_\Sigma \big(\psi-2H\phi w\Big)\;da = 0.
\label{eq:lin_conc_constr}
\end{equation}

Next we determine expressions for the second variation about an
equilibrium configuration $\Sigma$. First consider the $LA$
formulation. We write the energy (\ref{eq:energy_LA}) as,
\begin{equation}
\E_{LA} = \int_\Sigma {\mathcal F}\;da + \int_{S^2}\gamma_L(\x)(J-1)\;dA,
\end{equation}
where $\mathcal{F}$ contains terms that are common to both the
formulations, \emph{viz.,} $W$, $\phi$, $p$, \emph{etc}. The second
integral in the equation above accounts for the local area
constraint. The first variation can be abstractly written as,
\begin{equation}
\delta\E_{LA} = \int_\Sigma {\mathcal L}[\v,w,\psi] \;da +
\int_\Sigma\gamma_L(\nabla\cdot\v-2Hw)\;da + \int_{S^2}\nu_L(\x)(J-1)\;dA, 
\end{equation}
where $\mathcal{L}$ is a linear operator on the variations
$(\v,w,\psi)$ and $\nu_L(\x)$ is the variation in the Lagrange
multiplier field $\gamma_L$. The second variation then takes the form
\begin{equation}
\delta^2\E_{LA} = \int_\Sigma {\mathcal B}[\v,w,\psi] \;da +
\int_\Sigma\gamma_L(\nabla\cdot\v-2Hw)^2\;da + \int_\Sigma 2\nu_L (\nabla\cdot\v-2Hw)\;da,
\end{equation}
where $\mathcal{B}$ is a bilinear operator on the variations
$(\v,w,\psi)$. By admissibility (\ref{eq:lin_local_area}), the last integral in the
previous equation vanishes. Therefore,
\begin{equation}
\delta^2\E_{LA} =\int_\Sigma \mathcal{B}[\v,w,\psi] + \gamma_L(\nabla\cdot\v - 2Hw)^2\;da.
\label{eq:LA_SV}
\end{equation}
In fact, the second integral in (\ref{eq:LA_SV}) also vanishes, but it is
convenient to keep it for now.

Similarly, for the $GA$ formulation, we obtain the following
expression for the second variation:
\begin{equation}
\delta^2{\E}_{GA} = \int_\Sigma \mathcal{B}[\v,w,\psi]\;da + \gamma_G\int_\Sigma(\nabla\cdot\v-2Hw)^2\;da - 4\nu_G \int_\Sigma Hw\;da.
\end{equation}
Using the admissibility condition (\ref{eq:lin_global_area}), the last
integral may be dropped and the previous equation simplifies to
\begin{equation}
\delta^2{\E}_{GA} = \int_\Sigma \mathcal{B}[\v,w,\psi]\;da +\gamma_G \int_\Sigma(\nabla\cdot\v-2Hw)^2\;da.
\label{eq:GA_SV}
\end{equation}
Recall (cf. Theorem \ref{thm:equiv_equil}) that for any equilibrium
solution of the $GA$ formulation we can find an equivalent
representative solution of the $LA$ formulation, and
$\gamma_L=\gamma_G$ for such solutions. Accordingly we conclude that
at such an equilibrium,
\begin{equation}
\delta^2\E_{LA} = \delta^2\E_{GA}.
\label{eq:GA_SV=LA_SV}
\end{equation}
In other words, the only difference between the second-variation
conditions for the two formulations at an equilibrium is the
apparently different admissibility conditions
(\ref{eq:lin_local_area}) and (\ref{eq:lin_global_area}).

We now show that these are, in fact, the same. We first consider the $1$-form, $v^\flat$ associated with
$\v$ on $\Sigma$, defined by
$$v^\flat := v_\beta dx^\beta,$$ where
$v_\beta=g_{\alpha\beta}v^\alpha$. Using the Hodge decomposition for
the manifold $\Sigma$ \cite{warner}, we write
\begin{equation}
v^\flat = d\sigma + \delta\tau + \eta,
\label{eq:hodge_decomp}
\end{equation}
where $\sigma$ is a 0-form (function), $\tau$ is a 2-form, $\eta$ is a
1-harmonic form, $d(\cdot)$ is the exterior derivative operator and
$\delta(\cdot)$ is the codifferential operator. For any vector field
$\v$ and scalar field $\sigma$ on a 2-manifold, we note the following
standard identities:
\begin{subequations}
\begin{equation}
\nabla\cdot\v=-\delta v^\flat,
\end{equation}
\begin{equation}
\Delta\sigma=-\delta d\sigma.
\label{eq:L-B}
\end{equation}
\end{subequations}
Then using the decomposition (\ref{eq:hodge_decomp}), we write 
$$\nabla\cdot\v=-\delta(d\sigma+\delta\tau+\eta)$$
\begin{equation}
=-\delta d\sigma =\Delta \sigma,
\label{eq:div_v_is_Lap_sigma}
\end{equation}
where we have used the facts that $\delta\eta=0$ and $\delta^2=0$
(cf. \cite{warner}), and (\ref{eq:L-B}). By virtue of
(\ref{eq:div_v_is_Lap_sigma}) we may rewrite the linearized local area
constraint (\ref{eq:lin_local_area}) as follows,
\begin{equation}
\Delta\sigma=2Hw.
\label{eq:poisson}
\end{equation}
From the Fredholm alternative for elliptic PDE \cite{evans}, we
conclude that equation (\ref{eq:poisson}) has a solution if and only
if $2Hw$ is orthogonal to the null space of of the adjoint operator
$\Delta$. Together with the self-adjointness of $\Delta$ and the
fact that the only harmonic functions on a compact oriented Riemannian
manifold are constant functions \cite{warner}, we conclude that for
variations $\v$ and $w$,

\begin{equation}
\nabla\cdot\v =2Hw \iff \int_\Sigma 2Hw\;da = 0.
\label{eq:admissible_equivalent}
\end{equation}

We now conclude:
\begin{theorem}
Let the equilibrium configuration $\Sigma$ be a smooth closed surface
of genus zero. Then the second variations expressions for the two
formulations are the same (cf. (\ref{eq:LA_SV}), (\ref{eq:GA_SV}),
(\ref{eq:GA_SV=LA_SV})) and their admissibility conditions
(\ref{eq:lin_local_area}), (\ref{eq:lin_global_area}) are equivalent.
\label{thm:stability_equiv}
\end{theorem}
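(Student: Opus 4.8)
The two assertions have, in effect, been prepared by the preceding development, so my plan is mainly to assemble them while isolating the one substantive analytic step. For the equality of the second-variation expressions I would start from Theorem~\ref{thm:equiv_equil}: a genus-zero equilibrium of the $GA$ formulation admits a representative $\u^*$ in its equivalence class that also solves the $LA$ system, and for this common solution $\gamma_L=\gamma_G=:\gamma$. I would then evaluate (\ref{eq:LA_SV}) and (\ref{eq:GA_SV}) at $\u^*$ and compare them term by term: the bilinear integrand $\mathcal{B}[\v,w,\psi]$ is by construction the part common to both energies (the contributions of $\W$, $\phi$, $p$ and $V$), and the penalty terms $\gamma(\nabla\cdot\v-2Hw)^2$ coincide once $\gamma_L=\gamma_G$. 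This yields (\ref{eq:GA_SV=LA_SV}), i.e. $\delta^2\E_{LA}=\delta^2\E_{GA}$ as quadratic forms in $(\v,w,\psi)$; on the admissible class the penalty terms vanish and both reduce to $\int_\Sigma\mathcal{B}\,da$.

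For the admissibility conditions the target is the biconditional (\ref{eq:admissible_equivalent}), which I would establish in two directions. The forward direction is just the divergence theorem on the closed surface $\Sigma$: if $\nabla\cdot\v=2Hw$ then $\int_\Sigma 2Hw\,da=\int_\Sigma\nabla\cdot\v\,da=0$, which is (\ref{eq:lin_global_area}). For the converse I would invoke the Hodge decomposition on the closed, genus-zero surface exactly as in (\ref{eq:hodge_decomp})--(\ref{eq:div_v_is_Lap_sigma}): writing $v^\flat=d\sigma+\delta\tau+\eta$ and using $\nabla\cdot\v=-\delta v^\flat$, $\Delta\sigma=-\delta d\sigma$, $\delta\eta=0$ and $\delta^2=0$, the divergence collapses to $\nabla\cdot\v=\Delta\sigma$, so (\ref{eq:lin_local_area}) is the Poisson equation $\Delta\sigma=2Hw$ on $\Sigma$. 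By the Fredholm alternative for the self-adjoint Laplace--Beltrami operator, this is solvable for $\sigma$ precisely when $2Hw$ is $L^2$-orthogonal to $\ker\Delta$; since $\Sigma$ is compact and orientable, $\ker\Delta$ is spanned by the constants, so the solvability condition is exactly $\int_\Sigma 2Hw\,da=0$, and elliptic regularity makes $\sigma$ smooth. Taking $\v$ with $v^\flat=d\sigma$ then realizes a prescribed such $w$ inside an $LA$-admissible pair, giving (\ref{eq:admissible_equivalent}) and hence the theorem.

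The only genuinely analytic ingredient is the solvability of $\Delta\sigma=2Hw$, and I expect it to be the sole real obstacle --- though here the ``obstacle'' reduces to citing standard elliptic theory together with the Hodge theorem. Closedness and orientability of $\Sigma$ enter twice: via the Hodge decomposition, to reduce the surface divergence to the Laplacian of an exact part, and via compactness, to pin down $\ker\Delta$ as the constants so that the cokernel condition is the single scalar constraint $\int_\Sigma Hw\,da=0$; everything else is bookkeeping of terms already separated when (\ref{eq:LA_SV}) and (\ref{eq:GA_SV}) were written down. In the write-up I would also be careful to phrase the equivalence of admissibility at the level of the normal variation $w$: in the $GA$ problem the tangential $\v$ is unconstrained, while in the $LA$ problem it is tied to $w$ (and determined by it only modulo a divergence-free field), so the two admissible classes are not literally identical, but (\ref{eq:lin_local_area}) and (\ref{eq:lin_global_area}) constrain exactly the same set of normal variations $w$.
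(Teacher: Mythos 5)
Your proposal is correct and follows essentially the same route as the paper: the equality of the second variations comes from Theorem \ref{thm:equiv_equil} together with $\gamma_L=\gamma_G$, and the equivalence of the admissibility conditions is obtained exactly as in the paper via the Hodge decomposition, the identity $\nabla\cdot\v=\Delta\sigma$, and the Fredholm alternative for the Laplace--Beltrami operator on the compact oriented surface. Your added remark that the equivalence should be read at the level of the normal variation $w$ (with the existence of a suitable $\v$, e.g.\ $v^\flat=d\sigma$, supplied by the solvability of $\Delta\sigma=2Hw$) is a correct and slightly more explicit phrasing of what the paper asserts in (\ref{eq:admissible_equivalent}).
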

\noindent
An immediate consequence is:
\begin{corollary}
With $\u$, $\u^*\in[\u]$ as given in Theorem \ref{thm:equiv_equil},
the second-variation conditions for $\u$ as a solution of the $GA$ formulation are 
identical to those for $\u^*$ as a solution of the $LA$ formulation.
\label{cor:stability_equiv}
\end{corollary}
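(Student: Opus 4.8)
The plan is to deduce the corollary by composing Theorems~\ref{thm:equiv_equil} and~\ref{thm:stability_equiv}, using the reparametrization invariance of the $GA$ formulation to pass between $\u$ and $\u^*$. The first observation is that, although $\u=(\f,\phi)$ and $\u^*=(\f\circ{\bm\chi},\phi(\f\circ{\bm\chi}))$ are different mappings on $S^2$, they induce exactly the same \emph{spatial} data: the deformed surface $\Sigma=\f(S^2)=(\f\circ{\bm\chi})(S^2)$ and the phase field $\phi$ on $\Sigma$ are unchanged, since ${\bm\chi}\in\mathrm{Diff}(S^2)$ merely relabels material points. Every ingredient of the second-variation condition for the $GA$ formulation at either configuration --- the bilinear form $\int_\Sigma\mathcal{B}[\v,w,\psi]\;da+\gamma_G\int_\Sigma(\nabla\cdot\v-2Hw)^2\;da$ of (\ref{eq:GA_SV}), the scalar $\gamma_G$, and the admissibility conditions (\ref{eq:lin_global_area}) and (\ref{eq:lin_conc_constr}) --- is expressed purely in terms of $\Sigma$, the curvature $H$, the field $\phi$, and the spatial variations $\v,w,\psi$ living on $\Sigma$. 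Moreover, as ${\bm\eta}$ ranges over all smooth variations of $\f$, the induced spatial variation (\ref{eq:def_var_cur}) ranges over all smooth vector fields along $\Sigma$, independently of whether the surface is parametrized by $\f$ or $\f\circ{\bm\chi}$; hence the admissible sets coincide. Consequently the second-variation condition for $\u$ as a $GA$ solution is literally the same object as the second-variation condition for $\u^*$ as a $GA$ solution.

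It then remains to bridge the $GA$ and $LA$ conditions at $\u^*$. By Theorem~\ref{thm:equiv_equil}, $\u^*$ is a solution of the $LA$ formulation whose deformed surface is the genus-zero surface $\Sigma$; in particular $\Sigma$ is simultaneously an equilibrium of both formulations, with $\gamma_L=\gamma_G=\gamma$ (cf.\ the discussion at the end of Section~\ref{sec:formulation} and preceding (\ref{eq:GA_SV=LA_SV})). Theorem~\ref{thm:stability_equiv} therefore applies to $\Sigma$: the second-variation expressions agree, $\delta^2\E_{LA}=\delta^2\E_{GA}$ (cf.\ (\ref{eq:LA_SV}), (\ref{eq:GA_SV}), (\ref{eq:GA_SV=LA_SV})), and the admissibility conditions are equivalent, (\ref{eq:lin_local_area}) $\Leftrightarrow$ (\ref{eq:lin_global_area}), by (\ref{eq:admissible_equivalent}) together with the common phase-field condition (\ref{eq:lin_conc_constr}). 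Hence the second-variation condition for $\u^*$ as an $LA$ solution coincides with that for $\u^*$ as a $GA$ solution, and combining this with the previous paragraph gives the assertion.

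I expect the only delicate point --- and the one deserving the most care in a detailed write-up --- to be the justification that the ``second-variation condition,'' understood as the pair consisting of the quadratic form $\delta^2\E_{GA}$ and the set of admissible variations, is genuinely invariant under the fixed reparametrization ${\bm\chi}$. Concretely, one verifies that ${\bm\eta}^*\mapsto{\bm\eta}:={\bm\eta}^*\circ{\bm\chi}^{-1}$ is a bijection between smooth variations of $\f\circ{\bm\chi}$ and of $\f$ under which (i) the induced spatial fields $(\v,w,\psi)$ on $\Sigma$ agree, (ii) the linearized constraints correspond, and (iii) the second-variation forms agree. Claims (i)--(iii) follow by differentiating the identity (\ref{eq:repara}), and its analogues for the constraint functionals, twice in the variational parameter at the fixed diffeomorphism ${\bm\chi}$ --- equivalently, from the fact that ${\bm\chi}$ is a fixed change of coordinates leaving $\Sigma$, $H$, $K$, $\phi$ and $\gamma$ untouched. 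Beyond this the argument is purely bookkeeping, so no genuine obstacle remains once Theorems~\ref{thm:equiv_equil} and~\ref{thm:stability_equiv} are available.
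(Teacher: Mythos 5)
Your proposal is correct and follows essentially the same route as the paper, which presents the corollary as an immediate consequence of Theorem \ref{thm:stability_equiv} together with Theorem \ref{thm:equiv_equil}: since all quantities in (\ref{eq:LA_SV})--(\ref{eq:GA_SV=LA_SV}) and in the admissibility conditions are spatial fields on the common surface $\Sigma$ (with $\gamma_L=\gamma_G$), the reparametrization ${\bm\chi}$ changes nothing, and the equivalence (\ref{eq:admissible_equivalent}) does the rest. Your extra care in verifying that variations of $\f$ and of $\f\circ{\bm\chi}$ induce the same spatial fields $(\v,w,\psi)$ is a sound, if implicit-in-the-paper, bookkeeping step.
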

\begin{remark}
With (\ref{eq:GA_SV=LA_SV}) and (\ref{eq:admissible_equivalent}) in
hand, we observe that the second term on the right sides of (\ref{eq:LA_SV})
and (\ref{eq:GA_SV}) both vanish.
\end{remark}
\section{Concluding Remarks}
The reparametrization symmetry of the field equations for the $GA$
formulation is a reflection of the inherent in-plane fluidity of the
model. This leads to a large equivalence class of equilibrium
solutions. When such a solution represents a smooth, closed surface of
genus zero, we demonstrate that there is a member of the equivalence
class that also satisfies the field equations for the $LA$
formulation. In particular, it preserves the local area ratio as a
mapping from the unit sphere to the equilibrium configuration. We then
go on to show that all second-variation conditions for these two
solutions -- one as solution of the $GA$ formulation and the other as
a solution of the $LA$ formulation -- are identical.  

Questions of existence and regularity of solutions and their stability
are not addressed in this work. Some progress along these lines has
been made recently in \cite{Healey_SD}, where a plethora of
symmetry-breaking solutions for a class of phase-field models in the
$GA$ formulation have been obtained. Our results here show that each
solution found in \cite{Healey_SD}, has a representative in the
equivalence class that also satisfies the $LA$ field equations --
including and especially the local area constraint.
\bibliographystyle{sanjay}
\bibliography{mybib}
\end{document}